\documentclass[onecolumn,preprintnumbers,amsmath,amssymb]{revtex4}

\usepackage{array}
\usepackage{booktabs}
\usepackage{tabu}
\usepackage{dcolumn}
\usepackage{amsmath}
\usepackage{amsfonts}
\usepackage{amssymb}
\usepackage{graphicx}
\usepackage{subfigure}
\usepackage{graphicx}
\usepackage{dcolumn}
\usepackage{bm}
\usepackage{xcolor}
\usepackage{amsthm}
\def\be{\begin{equation}}
\def\ee{\end{equation}}
\def\bea{\begin{eqnarray}}
\def\eea{\end{eqnarray}}
\def\f{\frac}
\def\n{\nonumber}
\def\l{\label}
\def\p{\phi}
\def\o{\over}
\def\R{\rho}
\def\pa{\partial}
\def\om{\omega}
\def\na{\nabla}
\def\P{\Phi}
\newtheorem{theorem}{Theorem}

\begin{document}

\title{Lower and upper bounds for unilateral coherence and applying them to the  entropic uncertainty relations}

\author{H. Dolatkhah}
\affiliation{Department of Physics, University of Kurdistan, P.O.Box 66177-15175, Sanandaj, Iran}
\author{S. Haseli}
\affiliation{Faculty of Physics, Urmia University of Technology, Urmia, Iran}
\author{S. Salimi}
\email{shsalimi@uok.ac.ir}
\author{A. S. Khorashad}
\affiliation{
Department of Physics, University of Kurdistan, P.O.Box 66177-15175, Sanandaj, Iran.\\}
\date{\today}

\def\be{\begin{equation}}
  \def\ee{\end{equation}}
\def\bea{\begin{eqnarray}}
\def\eea{\end{eqnarray}}
\def\f{\frac}
\def\n{\nonumber}
\def\l{\label}
\def\p{\phi}
\def\o{\over}
\def\R{\rho}
\def\pa{\partial}
\def\om{\omega}
\def\na{\nabla}
\def\P{\Phi}

\begin{abstract}
The uncertainty principle sets a bound on our ability to predict the measurement outcomes of two incompatible observables which are measured on a quantum particle simultaneously. In quantum information theory, the uncertainty principle can be formulated in terms of the Shannon entropy. Entropic uncertainty bound can be improved by adding a particle which correlates with the measured particle. The added particle acts as a quantum memory. In this work, a method is provided for obtaining the entropic uncertainty relations in the presence of a quantum memory by using quantum coherence. In the method, firstly, one can use the quantum relative entropy of quantum coherence to obtain the uncertainty relations. Secondly, these relations are applied to obtain the entropic uncertainty relations in the presence of a quantum memory. In comparison with other methods this approach is much simpler. Also, for a given state, the upper bounds on the sum of the relative entropies of unilateral coherences are provided, and it is shown which one is tighter. In addition, using the upper bound obtained for unilateral coherence, the nontrivial upper bound on the sum of the entropies for different observables is derived in the presence of a quantum memory.

\end{abstract}
\maketitle

\section{INTRODUCTION}	
The uncertainty  principle is one of the most important topics in quantum theory \cite{Heisenberg}. According to this principle, our ability to predict the measurement outcomes of two incompatible observables, which simultaneously are measured on a quantum system, is restricted. This statement for the uncertainty principle is formulated by the uncertainty relation. Robertson and Schr\"{o}dinger have shown that for any two incompatible observables, $X$ and $Z$, the uncertainty relation has the following form \cite{Robertson,Schrodinger}, 
\begin{equation}\label{Roberteq}
\Delta X \Delta Z \geqslant \frac{1}{2}\vert \langle \left[ X, Z \right] \rangle \vert,
\end{equation}
where $\Delta X, \Delta Z$ are the standard deviations of $X$ and $Z$, 
\begin{equation}\label{deviation}
\Delta X = \sqrt{\langle X^{2}\rangle -\langle X \rangle^{2}} \quad \rm{and} \quad \Delta Z = \sqrt{\langle Z^{2}\rangle -\langle Z \rangle^{2}},
\end{equation}
respectively. Due to the state-dependent characteristic of this lower bound, the bound can vanish even when $X$ and $Z$ are incompatible. In order to overcome this roughness, Deutsch applied the Shannon entropy (instead of the standard deviation) to derive the uncertainty relation \cite{Deutsch}. Deutsch introduced the entropic uncertainty relation (EUR) which is the counterpart of the Heisenberg uncertainty relation in quantum information theory. Deutsch's inequality was improved by Massen and Uffink as \cite{Uffink}
\begin{equation}\label{Maassen and Uffink}
H(X)+H(Z)\geqslant \log_2 \frac{1}{c}\equiv q_{MU},
\end{equation}
where $H(O) = -\sum_{k} p_k \log_2 p_k$ is the Shannon entropy of the measured observable $O \in \lbrace X, Z \rbrace$, $p_k$ is the probability of the outcome $k$, the quantity $c$ is defined as $c = \max_{\lbrace \mathbb{X},\mathbb{Z}\rbrace } \vert\langle x \vert z\rangle \vert ^{2}$, in which $\mathbb{X}=\lbrace \vert x\rangle \rbrace$ and $\mathbb{Z}=\lbrace \vert z\rangle \rbrace$ are the eigenbases of $X$ and $Z$, respectively, and $q_{MU}$ is called incompatibility measure. \\
Various efforts have been done to improve this relation \cite{Berta,Coles1,Bialynicki,Pati,Ballester,Vi,Wu,Wehner,Rudnicki,Rudnicki1,Pramanik,Maccone,Pramanik1, Zozor,Coles,Adabi,Adabi1,Yunlong,Liu,Kamil,Zhang,R}. Berta \emph{et al.} studied EUR in the presence of a quantum memory \cite{Berta}. They considered an extra quantum system as the quantum memory $B$ correlating with the measured quantum system $A$. The strategy can be explained in terms of the uncertainty game between two players, Alice and Bob. In the game, Bob prepares a correlated bipartite state $\rho_{AB}$ and sends the part $A$ to Alice and keeps the other one as the quantum memory. Alice performs a measurement on her part by choosing one of $X$ and $Z$, then she informs Bob on her choice of the measurement. Bob's task is to minimize his uncertainty about the outcome of Alice's measurement . The EUR in the presence of a quantum memory can be written as \cite{Berta}  
\begin{equation}\label{Berta}
H(X \vert B)+H(Z \vert B) \geqslant q_{MU} +S(A \vert B),
\end{equation}
where $H(O \vert B) = S(\rho_{OB})-S(\rho_{B})$ $(O \in \lbrace X, Z \rbrace)$ and $S(A|B) = S(\rho_{AB})-S(\rho_{B})$ are the conditional von Neumann entropies. The post measurement states after measuring $X$ or $Z$ on the part $A$ are given by 
\begin{equation}
\rho_{XB}= \sum_{x}(\vert x\rangle\langle x\vert_{A}\otimes \mathbf{I}_B ) \rho_{AB}(\vert x\rangle\langle x\vert_{A}\otimes \mathbf{I}_{B} ),\nonumber
\end{equation}
or 
\begin{equation}
\rho_{ZB}= \sum_{z}(\vert z\rangle\langle z\vert_{A}\otimes \mathbf{I}_{B} ) \rho_{AB}(\vert z\rangle\langle z\vert_{A}\otimes \mathbf{I}_{B} ),\nonumber
\end{equation}
respectively, where $\mathbf{I}_{B}$ is the identity operator. Note that when the conditional entropy $S(A \vert B)$ is negative which means that the part $A$ and the quantum memory $B$ are entangled, the uncertainty lower bound reduces, therefore, Bob can predict Alice's measurement outcomes with better accuracy. Moreover, when particle $A$ is maximally entangled with the memory $B$, Bob can accurately predict Alice's measurement outcomes. Also, if there is no quantum memory, Eq.\;(\ref{Berta}) reduces to
\begin{equation}\label{Berta2}
H(X)+H(Z)\geqslant q_{MU}+S(A),
\end{equation}
which is tighter than the Maassen and Uffink EUR.

Much effort has been made to improve the lower bound of the EUR in the presence of a quantum memory \cite{Pati,Adabi,Coles}. Pati \emph{et al.} improved the EUR in the presence of a quantum memory by adding a term to the lower bound in Eq.\;(\ref{Berta}) \cite{Pati},
\begin{equation}\label{pati}
H(X \vert B)+H(Z \vert B) \geqslant q_{MU} + S(A \vert B)+ \max\lbrace 0,D_{A}(\rho_{AB})-J_{A}(\rho_{AB}) \rbrace,
\end{equation}
where $D_{A}(\rho_{AB}):=I(A:B)-J_{A}(\rho_{AB})$ is quantum discord, $I(A:B)=S(\rho_{A})+S(\rho_{B})-S(\rho_{AB})$  is mutual information, and
\begin{equation}\label{ss}
J_{A}(\rho_{AB})=S(\rho_{B})-\min_{\lbrace \Pi_{i}^{A} \rbrace}S(\rho_{B\vert \lbrace \Pi_{i}^{A} \rbrace})
\end{equation} 
is the classical correlation between $A$ and $B$. The minimization is taken over the set of all possible positive operator-valued measures (POVMs) on the subsystem $A$. In Eq.(\ref{ss}), $S(\rho_{B\vert \lbrace \Pi_{i}^{A} \rbrace})=\sum_{i}p_{i}S(\rho_{B|i})$, where the post measurement state  $\rho_{B|i}=  \frac{Tr_{A} \big( (\vert i\rangle\langle i \vert_{A}\otimes \mathbf{I}_{B})\rho_{AB}(\vert i\rangle\langle i \vert_{A}\otimes \mathbf{I}_{B}) \big)}{p_{i}}$  is obtained with the probability $p_{i}= Tr_{AB}\big( (\vert i\rangle\langle i \vert_{A}\otimes \mathbf{I}_{B})\rho_{AB}(\vert i\rangle\langle i \vert_{A}\otimes \mathbf{I}_{B})\big)$.  Note that if $D_{A}(\rho_{AB})>J_{A}(\rho_{AB})$ Pati's lower bound is tighter than that of Berta.

Adabi \emph{et al.} provided a lower bound for the EUR in the presence of a quantum memory by considering an additional term on the right-hand side of Eq.\;(\ref{Berta}) \cite{Adabi},
\begin{equation}\label{new1}
H(X \vert B)+H(Z \vert B)\geqslant q_{MU} + S(A|B)+\max\{0 , \delta\},
\end{equation}
where  $$\delta=I(A:B)-[I(X:B)+I(Z:B)],$$
and
$$I\big(O:B\big)= S(\rho_{B})- \sum_{o}p_{o}S(\rho_{B|o}), O \in \lbrace X, Z \rbrace$$
 is the Holevo quantity. It is actually equal to the upper bound of the accessible information to Bob about Alice's measurement outcomes. They showed that for many states such as the Bell diagonal and the Werner states, this lower bound is tighter than those of Berta and Pati. \\
The EUR has many applications in the field of quantum information, including quantum key distribution \cite{Koashi,Berta}, quantum cryptography \cite{Dupuis,Koenig}, quantum randomness \cite{Vallone,Cao}, entanglement witness \cite{Berta2}, EPR steering \cite{Walborn,Schneeloch}, and quantum metrology \cite{Giovannetti}.\\ 

The motivation of this work is to provide a method which can be applied to obtain the EURs, in the presence of a quantum memory, from the  uncertainty relations for the relative entropy of quantum coherence. Also, for a given bipartite quantum state, the upper bounds on the sum of the relative entropies of unilateral coherences are provided which can be used to derive a nontrivial upper bound on the sum of the entropies for two incompatible observables in the presence of a quantum memory. The paper is organized as follows: In Sec. \ref{Sec2} a brief review on quantum coherence  and purity from the viewpoint of the resource theory is presented. In Sec. \ref{Sec3} several lower and upper bounds on the sum of the relative entropies of unilateral coherences of a given state with respect to two different measurement bases are derived. In Sec. \ref{Sec4}, it is shown that how one can obtain the  EURs in the presence of a quantum memory by using  the uncertainty relations of a quantum coherence. Finally, the results are summarized in Sec. \ref{conclusion}.

\section{Preliminary}\label{Sec2}
\subsection{Resource theory of quantum coherence}
All quantum resource theories have two main ingredients: (1) free states which are unable to provide the resources for quantum information processing and (2) free operations that map the set of free states onto itself. Quantum coherence is undoubtedly a cornerstone of quantum physics which differentiates quantum mechanics from the classical world. In this section, we review the resource framework for quantum coherence \cite{Baumgratz,Streltsov}. Let us consider a $d$-dimensional Hilbert space $\mathcal{H}_{d}$ and an arbitrary fixed reference basis $\mathbb{Y}=\lbrace \vert y \rangle \rbrace_{y=1,...,d}$. The free states which are called  incoherent states, can be written as
 \begin{equation}
 \delta =\sum_{y=1}^{d} \delta_{y}\vert y \rangle\langle y \vert, 
 \end{equation}
where $\delta_{y} \in [0,1] $ and $\sum_{y=1}^{d} \delta_{y}=1$.  The  set of incoherent states is denoted by $\mathcal{I}$. The states which are not of the above form are called coherent states. In the resource theory of quantum coherence, the definition of free operations is not unique, and in fact several definitions have been provided in the literature \cite{Streltsov}. The most important free operations of this resource theory are called incoherent operations introduced by Baumgratz \emph{et al.} \cite{Baumgratz}, and have the Kraus representation form as $ \Lambda_{IO}[\rho]=\sum_{m} K_{m} \rho K_{m}^{\dag}$, where $\Lambda_{IO}$ is an  incoherent operation and $K_{m} \mathcal{I} K_{m}^{\dag} \subseteq \mathcal{I}$.\\ A proper measure of coherence, $C(\rho)$, should satisfy the following conditions \cite{Baumgratz,Streltsov}
\begin{enumerate}
\item $C(\rho)\geqslant 0$ for any state $\rho$ and $C(\rho) = 0$ iff $\rho$ is the incoherent state.
\item $C(\rho)$ is nonincreasing under the incoherent operations,
i.e., $C(\rho) \geqslant C( \Lambda_{IO} [\rho] )$.
\item $C(\rho)$ is nonincreasing on average under the selective incoherent operations, i.e., $ C(\rho) \geqslant \sum_{m} p_{m}C(\rho_{m})$ in which $p_{m}=Tr[K_{m} \rho K_{m}^{\dag}]$, $ \rho_{m}=K_{m} \rho K_{m}^{\dag} / p_{m}$ and $K_{m} \mathcal{I} K_{m}^{\dag} \subseteq \mathcal{I}$.
\item $C(\rho)$ is convex, i.e., $\sum_{i} p_{i}C(\rho_{i})\geqslant C(\sum_{i} p_{i}\rho_{i})$ 
\end{enumerate}
Over the last few years different measures have been proposed for quantum coherence \cite{Baumgratz,Streltsov,Yao,Rana,Napoli}; one of the most important measures is the relative entropy of coherence \cite{Baumgratz} which is defined as
 \begin{equation}
C_{rel}(\mathbb{Y}\vert\rho)=\min_{\delta \in \mathcal{I}} S(\rho\Vert \delta),
\end{equation}
where $S(\rho\Vert \delta)=Tr[\rho\log_{2}\rho]-Tr[\rho\log_{2}\delta]$ is the relative entropy. The relative entropy of coherence can be expressed in a simple form as \cite{Baumgratz}
\begin{equation}\label{rc} 
C_{rel}(\mathbb{Y}\vert\rho)=S\big(\rho\Vert \Delta_{\mathbb{Y}}(\rho)\big)=S\big(\Delta_{\mathbb{Y}}(\rho)\big)-S(\rho),
\end{equation}
where $\Delta_{\mathbb{Y}}(\rho)=\sum_{y} \langle y \vert\rho\vert y \rangle \vert y \rangle\langle y \vert$ is obtained from removing all off-diagonal elements of the state density matrix. Note that $S\big(\Delta_{\mathbb{Y}}(\rho)\big)$ is the Shannon entropy, $H(Y)$, with the probabilities $p_{y}=\langle y \vert\rho\vert y \rangle$. Thus Eq. (\ref{rc}) can be rewritten as
\begin{equation}\label{relative coherence}
C_{rel}(\mathbb{Y}\vert\rho)=H(Y)-S(\rho).
\end{equation}

One can extend the concept of coherence to bipartite system \cite{Streltsov2,Bromley,Chitambar,Teng Ma,Jiajun Ma}. For a bipartite system $AB$, it is possible to consider the coherence with respect to a local basis of one of the subsystems, for example $A$. This is called unilateral coherence \cite{Teng Ma}. Thus in the bipartite system $AB$, incoherent-quantum state with respect to the basis of the subsystem $A$ ($\lbrace\vert y \rangle _{A}\rbrace$ ) is defined as \cite{Jiajun Ma,Chitambar}
\begin{equation}
\sigma_{AB}\in \mathcal{I}_{B|A}, \quad \sigma_{AB}=\sum_{y}p_y \vert y\rangle\langle y \vert_{A}\otimes \sigma_{B|y},
\end{equation}
where $\mathcal{I}_{B|A}$ is the set of incoherent-quantum states, and $\sigma_{B|y}$ is an arbitrary state of the subsystem $B$. The relative entropy of unilateral coherence, $C_{rel}^{B|A}$, is a measure of unilateral coherence for a bipartite density matrix $\rho_{AB}$ and is given by
\begin{equation}\label{reou}
C_{rel}^{B|A}(\mathbb{Y}\vert\rho_{AB})=\min_{\sigma_{AB}\in \mathcal{I}_{B|A} }S(\rho_{AB}||\sigma_{AB}).
\end{equation}
Minimizing the right-hand side of Eq. (\ref{reou}), leads us to \cite{Chitambar}
\begin{equation}\label{reou2}
C_{rel}^{B|A}(\mathbb{Y}\vert\rho_{AB})=S\big(\rho_{AB}||\Delta_{\mathbb{Y}}(\rho_{AB})\big)=S(\rho_{YB})-S(\rho_{AB}),
\end{equation}
where 
\begin{equation}\nonumber
\Delta_{\mathbb{Y}}(\rho_{AB})=\rho_{YB}=\sum_{y}\left( \vert y\rangle\langle y \vert_{A}\otimes \mathbf{I}_{B})\rho_{AB}(\vert y\rangle\langle y \vert_{A}\otimes \mathbf{I}_{B}\right) ,
\end{equation}
or
\begin{equation}\nonumber
\rho_{YB}=\sum_{y}p_y \vert y\rangle\langle y \vert_{A}\otimes \rho_{B|y},
\end{equation}
in which $p_{y}= Tr_{AB} \big( (\vert y\rangle\langle y \vert_{A}\otimes \mathbf{I}_{B})\rho_{AB}(\vert y\rangle\langle y \vert_{A}\otimes \mathbf{I}_{B}) \big),$ and $\rho_{B|y}= \frac{Tr_{A} \big( (\vert y\rangle\langle y \vert_{A}\otimes \mathbf{I}_{B})\rho_{AB}(\vert y\rangle\langle y \vert_{A}\otimes \mathbf{I}_{B}) \big)}{p_{y}}$.\\
Regarding Eqs. (\ref{relative coherence}) and (\ref{reou2}), one can obtain  
\begin{equation}\label{unilateral coherence}
C_{rel}^{B|A}(\mathbb{Y}\vert\rho_{AB})=C_{rel}(\mathbb{Y}\vert\rho_{A})+I(A:B)-I(Y:B),
\end{equation} 
which indicates the relation between unilateral coherence and the coherence of a subsystem $(A)$.
\subsection{Resource theory of purity}
Now, let us briefly review the concept of purity from the resource theory point of view. In this resource theory the free state is the maximally mixed state, $\dfrac{\mathbf{I}}{d}$ and the free operations can be unital operations, mixture of unitary operations and noisy operations \cite{Gour,Streltsov3}. Several measures of purity have been introduced; the most important example is the relative entropy of purity defined as
\begin{equation}\label{purity}
\mathcal{P}_{rel}(\rho)=S(\rho\Vert \dfrac{\mathbf{I}}{d})=\log_{2} d-S(\rho).
\end{equation}
As can be seen, there is no minimization in Eq.\;(\ref{purity}) due to the uniqueness of the free state in this resource theory.
Since $\dfrac{\mathbf{I}}{d}\in\mathcal{I}$, the relative entropy of purity is an upper bound of the relative entropy of coherence,  
\begin{equation}\label{puritycoherence}
\mathcal{P}_{rel}(\rho)\geqslant C_{rel}(\mathbb{Y}\vert\rho).
\end{equation}
For a bipartite system $AB$, one can define unilateral purity with respect to the subsystem $A$. In this case, the free state is $\dfrac{\mathbf{I}_{A}}{d}\otimes \rho_{B},$ where $d$ is the dimension of the subsystem $A$ and $\rho_{B}$ is an arbitrary state of the subsystem $B$. The relative entropy of unilateral purity is then defined as
\begin{equation}\label{uni purity}
\mathcal{P}_{rel}^{B|A}(\rho_{AB})=S(\rho_{AB}\Vert \dfrac{\mathbf{I}_{A}}{d}\otimes \rho_{B} )=\log_{2} d-S(A \vert B).
\end{equation}
From Eqs.\;(\ref{purity}) and (\ref{uni purity}), one can obtain the following expression for the unilateral purity,
\begin{equation}
\mathcal{P}_{rel}^{B|A}(\rho_{AB})=\mathcal{P}_{rel}(\rho_{A})+I(A:B).
\end{equation}
This means that the unilateral purity is equal to the sum of the purity of the subsystem $A$ and the mutual information.
\section{Entropic uncertainty relations for Quantum coherence}\label{Sec3}
Many of the uncertainty relations proposed for the relative entropy of coherence are based on the EURs \cite{Singh,Cheng,Dolatkhah,Yuan}. In this section, regarding the properties of  quantum coherence, uncertainty relations for the relative entropy of coherence of two  incompatible observables (two incompatible quantum measurements bases) are obtained. The nonincreasing behavior of the relative entropy under quantum operations is the most important property applied to obtain uncertainty relations for the relative entropy of coherence in the following theorems.\\ 
\begin{theorem}
Given two general measurements $\mathbb{X}$ and $\mathbb{Z}$, the following uncertainty relation for the relative entropy of coherence holds for any state $\rho$,
\begin{equation}\label{unrc1}
C_{rel}(\mathbb{X}\vert\rho)+C_{rel}(\mathbb{Z}\vert\rho)\geqslant q_{MU} -S(\rho).
\end{equation}
\end{theorem}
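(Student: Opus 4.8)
The plan is to first collapse the statement into a purely entropic one by invoking the closed form of Eq.~(\ref{relative coherence}). Writing $C_{rel}(\mathbb{X}\vert\rho)=H(X)-S(\rho)$ and $C_{rel}(\mathbb{Z}\vert\rho)=H(Z)-S(\rho)$ and adding them gives
\begin{equation}
C_{rel}(\mathbb{X}\vert\rho)+C_{rel}(\mathbb{Z}\vert\rho)=H(X)+H(Z)-2S(\rho),\nonumber
\end{equation}
so the claimed inequality is \emph{equivalent} to $H(X)+H(Z)\geqslant q_{MU}+S(\rho)$, that is, exactly the state-dependent Maassen--Uffink bound of Eq.~(\ref{Berta2}) with $S(A)=S(\rho)$. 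It is worth noting up front that merely feeding the bare relation $H(X)+H(Z)\geqslant q_{MU}$ of Eq.~(\ref{Maassen and Uffink}) into the identity above would only produce $q_{MU}-2S(\rho)$, which is strictly weaker; the extra $+S(\rho)$ is precisely what the proof must earn.

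To earn it I would use the property flagged just before the theorem, namely monotonicity of the relative entropy under CPTP maps. The decisive choice is to apply the dephasing channel $\Delta_{\mathbb{Z}}$ to the pair $\big(\rho,\Delta_{\mathbb{X}}(\rho)\big)$, taking the dephased state rather than the maximally mixed state as the reference. Monotonicity then yields
\begin{equation}
S\big(\Delta_{\mathbb{Z}}(\rho)\Vert \Delta_{\mathbb{Z}}(\Delta_{\mathbb{X}}(\rho))\big)\leqslant S\big(\rho\Vert\Delta_{\mathbb{X}}(\rho)\big)=C_{rel}(\mathbb{X}\vert\rho)=H(X)-S(\rho).\nonumber
\end{equation}
Both arguments on the left are diagonal in the $Z$ basis, so the left-hand side reduces to a classical relative entropy $\sum_{z}p_z\log_2(p_z/r_z)$ with $p_z=\langle z\vert\rho\vert z\rangle$ and $r_z=\langle z\vert\Delta_{\mathbb{X}}(\rho)\vert z\rangle=\sum_x\langle x\vert\rho\vert x\rangle\,\vert\langle z\vert x\rangle\vert^2$.

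The key estimate is then $r_z\leqslant c$, which holds because $\vert\langle z\vert x\rangle\vert^2\leqslant c$ for every $x,z$ while $\sum_x\langle x\vert\rho\vert x\rangle=1$ with nonnegative terms. Consequently $-\log_2 r_z\geqslant q_{MU}$ for each $z$, and
\begin{equation}
\sum_{z}p_z\log_2\frac{p_z}{r_z}=-H(Z)-\sum_z p_z\log_2 r_z\geqslant q_{MU}-H(Z).\nonumber
\end{equation}
Chaining this lower bound with the monotonicity inequality gives $q_{MU}-H(Z)\leqslant H(X)-S(\rho)$, i.e. $H(X)+H(Z)\geqslant q_{MU}+S(\rho)$; substituting back into the entropic identity of the first paragraph delivers $C_{rel}(\mathbb{X}\vert\rho)+C_{rel}(\mathbb{Z}\vert\rho)\geqslant q_{MU}-S(\rho)$, as required.

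I expect the only genuine obstacle to be conceptual rather than computational: one must recognize that the naive substitution of Maassen--Uffink throws away a full $S(\rho)$, and that the reference state inside the relative entropy should be the dephased state $\Delta_{\mathbb{X}}(\rho)$ and not $\mathbf{I}/d$. Once that choice is fixed, the remaining work is the diagonal bookkeeping of the previous paragraph together with the elementary overlap bound $r_z\leqslant c$, both of which are routine.
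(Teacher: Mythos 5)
Your proof is correct and follows essentially the same route as the paper: both apply monotonicity of the relative entropy under the dephasing channel to the pair $\big(\rho,\Delta(\rho)\big)$ and then exploit the overlap bound $\vert\langle x\vert z\rangle\vert^{2}\leqslant c$ on the doubly-dephased reference state (the paper dephases by $\Delta_{\mathbb{X}}$, you by $\Delta_{\mathbb{Z}}$, which is the same argument by symmetry). The only cosmetic difference is that where the paper invokes the operator-level lemma that the relative entropy is nonincreasing in its second argument to replace $\Delta_{\mathbb{X}}(\Delta_{\mathbb{Z}}(\rho))$ by $c\mathbf{I}$, you carry out the equivalent classical computation directly on the diagonal states, which is a slightly more elementary rendering of the same estimate.
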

\begin{proof}
Approach is similar to the one used in \cite{K. Korzekwa}
\begin{eqnarray}\small
C_{rel}(\mathbb{X}\vert\rho)+C_{rel}(\mathbb{Z}\vert\rho)&=&S\big(\rho\Vert\Delta_{\mathbb{X}}(\rho)\big)+S\big(\rho\Vert\Delta_{\mathbb{Z}}(\rho)\big) \nonumber \\
 & \geqslant & S\big(\rho\Vert\Delta_{\mathbb{X}}(\rho)\big)+S\big(\Delta_{\mathbb{X}}(\rho)\Vert\Delta_{\mathbb{X}}(\Delta_{\mathbb{Z}}(\rho))\big) \nonumber \\
 &=&S\big(\rho\Vert\Delta_{\mathbb{X}}(\rho)\big)+S\left( \Delta_{\mathbb{X}}(\rho)\Vert\sum_{x,z}\vert\langle x \vert z\rangle \vert ^{2}\langle z \vert\rho\vert z \rangle \vert x \rangle\langle x \vert\right) \nonumber \\
 &\geqslant & S\big(\rho\Vert\Delta_{\mathbb{X}}(\rho)\big)+S\big(\Delta_{\mathbb{X}}(\rho)\Vert c \mathbf{I}\big)=-S(\rho)+\log_2 \frac{1}{c},
\end{eqnarray}
where the first and the second inequalities follow from the fact that the relative entropy is both contractive under completely positive maps \cite{Nielsen} and nonincreasing function under of its second argument (see lemma 5 of Ref.\;\cite{Coles33}), respectively.
\end{proof} 
Considering $H(X)=C_{rel}(\mathbb{X}\vert\rho)+S(\rho)$ along with Eq.\;(\ref{unrc1}) leads us to the following EUR
\begin{equation}\label{Berta22222}
H(X)+H(Z)\geqslant q_{MU}+S(A).
\end{equation}
As can be seen, the above derivation of Eq.\;(\ref{Berta22222}) is much easier than what has been provided in \cite{Berta}. Similarly, uncertainty relation can be obtained for unilateral coherence.\\ 
\begin{theorem}

Let $X$ and $Z$ be two incompatible observables with bases $\mathbb{X}$ and $\mathbb{Z}$, respectively. The following uncertainty relation for the relative entropy of unilateral coherence of these observables holds for any state $\rho_{AB}$,

\begin{equation}\label{berta coherence}
C_{rel}^{B|A}(\mathbb{X}\vert\rho_{AB})+C_{rel}^{B|A}(\mathbb{Z}\vert\rho_{AB})\geqslant q_{MU} -S(A\vert B),
\end{equation}
\end{theorem}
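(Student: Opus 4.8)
The plan is to run the argument of Theorem~1 essentially verbatim, with the single structural change that every dephasing map now acts only on the $A$ factor while leaving $B$ untouched. Writing $C_{rel}^{B|A}(\mathbb{X}\vert\rho_{AB})=S\big(\rho_{AB}\Vert\Delta_{\mathbb{X}}(\rho_{AB})\big)$ and likewise for $\mathbb{Z}$ via Eq.~(\ref{reou2}), I would first split the left-hand side into a sum of two relative entropies, and then apply the completely positive map $\Delta_{\mathbb{X}}$ to \emph{both} arguments of the second one. Contractivity of the relative entropy under CPTP maps \cite{Nielsen} gives
\be
S\big(\rho_{AB}\Vert\Delta_{\mathbb{Z}}(\rho_{AB})\big)\geqslant S\big(\Delta_{\mathbb{X}}(\rho_{AB})\Vert\Delta_{\mathbb{X}}(\Delta_{\mathbb{Z}}(\rho_{AB}))\big),
\ee
exactly as in the unconditional case.

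The key computation is to identify the twice-dephased operator. Because both maps act as the identity on $B$, one finds that $\Delta_{\mathbb{X}}\big(\Delta_{\mathbb{Z}}(\rho_{AB})\big)$ is block diagonal in the $\mathbb{X}$ basis of $A$, with $x$-th block equal to the positive operator $M_{x}=\sum_{z}\vert\langle x\vert z\rangle\vert^{2}\,p_{z}\,\rho_{B|z}$ on $B$. The crucial step — and the only place where the bipartite structure requires real care — is the operator inequality $M_{x}\leqslant c\,\rho_{B}$ for every $x$. This follows because $\vert\langle x\vert z\rangle\vert^{2}\leqslant c$ and each $p_{z}\rho_{B|z}\geqslant 0$, so that $\sum_{z}\vert\langle x\vert z\rangle\vert^{2}p_{z}\rho_{B|z}\leqslant c\sum_{z}p_{z}\rho_{B|z}=c\,\rho_{B}$. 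Since the blocks sit in mutually orthogonal $A$-subspaces, this upgrades to $\Delta_{\mathbb{X}}(\Delta_{\mathbb{Z}}(\rho_{AB}))\leqslant c\,(\mathbf{I}_{A}\otimes\rho_{B})$, the natural analogue of $\Delta_{\mathbb{X}}(\Delta_{\mathbb{Z}}(\rho))\leqslant c\,\mathbf{I}$ used in Theorem~1.

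With this inequality in hand I would invoke monotonicity of the relative entropy under a decrease of its second argument (Lemma~5 of Ref.~\cite{Coles33}) to replace $\Delta_{\mathbb{X}}(\Delta_{\mathbb{Z}}(\rho_{AB}))$ by $c\,(\mathbf{I}_{A}\otimes\rho_{B})$, and then evaluate the resulting relative entropy directly. Using $\log_{2}\big(c\,\mathbf{I}_{A}\otimes\rho_{B}\big)=(\log_{2}c)\,\mathbf{I}_{AB}+\mathbf{I}_{A}\otimes\log_{2}\rho_{B}$ together with $\mathrm{Tr}_{A}\,\rho_{XB}=\rho_{B}$ (dephasing on $A$ leaves the $B$-marginal fixed) yields
\be
S\big(\Delta_{\mathbb{X}}(\rho_{AB})\Vert c\,\mathbf{I}_{A}\otimes\rho_{B}\big)=q_{MU}-S(\rho_{XB})+S(\rho_{B}).
\ee
Adding the untouched first term $S(\rho_{AB}\Vert\Delta_{\mathbb{X}}(\rho_{AB}))=S(\rho_{XB})-S(\rho_{AB})$, the $S(\rho_{XB})$ contributions cancel and the remainder collapses to $q_{MU}-\big(S(\rho_{AB})-S(\rho_{B})\big)=q_{MU}-S(A\vert B)$, which is the claimed bound. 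The one genuinely new ingredient compared with Theorem~1 is the operator inequality $M_{x}\leqslant c\,\rho_{B}$; everything else is a faithful lift of the unconditional proof to the conditional setting, with $\mathbf{I}$ replaced by $\mathbf{I}_{A}\otimes\rho_{B}$ and $S(\rho)$ by $S(A\vert B)$.
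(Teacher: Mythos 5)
Your proposal is correct and follows essentially the same route as the paper's own proof: split the sum into two relative entropies, apply contractivity of $S(\cdot\Vert\cdot)$ under the dephasing map $\Delta_{\mathbb{X}}$ to the second term, bound the twice-dephased state by $c\,\mathbf{I}_{A}\otimes\rho_{B}$ using Lemma~5 of Ref.~\cite{Coles33}, and evaluate the remaining terms. The only difference is that you spell out the operator inequality $\Delta_{\mathbb{X}}(\Delta_{\mathbb{Z}}(\rho_{AB}))\leqslant c\,(\mathbf{I}_{A}\otimes\rho_{B})$ blockwise, a step the paper leaves implicit.
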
 
\begin{proof}
The proof is similar to the proof of the previous theorem.
\begin{eqnarray}
C_{rel}^{B|A}(\mathbb{X}\vert\rho_{AB})+C_{rel}^{B|A}(\mathbb{Z}\vert\rho_{AB})&=&S\left( \rho_{AB}\Vert\Delta_{\mathbb{X}}(\rho_{AB})\right) +S\left( \rho_{AB}\Vert\Delta_{\mathbb{Z}}(\rho_{AB})\right) \nonumber \\
&\geqslant & S\left( \rho_{AB}\Vert\Delta_{\mathbb{X}}(\rho_{AB})\right) +S\left( \Delta_{\mathbb{X}}(\rho_{AB})\Vert\Delta_{\mathbb{X}}(\Delta_{\mathbb{Z}}(\rho_{AB}))\right)  \nonumber \\
&=& S\left( \rho_{AB}\Vert\Delta_{\mathbb{X}}(\rho_{AB})\right) +S\left( \Delta_{\mathbb{X}}(\rho_{AB})\Vert \sum_{x,z}p_z \vert\langle x \vert z\rangle \vert ^{2}\vert x\rangle\langle x \vert_{A}\otimes \rho_{B|z}\right)  \nonumber \\
 & \geqslant & S\left( \rho_{AB}||\rho_{XB}\right) +S\left( \rho_{XB}||c \mathbf{I}_{A}\otimes \rho_{B}\right)  \nonumber \\
& =& -S(A\vert B)+\log_2 \frac{1}{c}.
 \end{eqnarray}

\end{proof}
Using Eqs.\;(\ref{unilateral coherence}) and (\ref{unrc1}), one can propose a different uncertainty relation for the relative entropy of unilateral coherence, whose lower bound is tighter than that of Eq.\;(\ref{berta coherence}).\\ 
\begin{theorem}
Given two noncommuting observables $X$ and $Z$ with eigenbases $\mathbb{X}$ and $\mathbb{Z}$, respectively. The following uncertainty relation for the relative entropy of unilateral coherence of these observables holds for any state $\rho_{AB}$,
\begin{equation}\label{pati coherence}
C_{rel}^{B|A}(\mathbb{X}\vert\rho_{AB})+C_{rel}^{B|A}(\mathbb{Z}\vert\rho_{AB}) \geqslant  q_{MU} - S(A \vert B)+\max\lbrace 0,D_{A}(\rho_{AB})-J_{A}(\rho_{AB}) \rbrace.
\end{equation}
\end{theorem}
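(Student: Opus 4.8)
The plan is to reduce the statement to Theorem 1 via the decomposition in Eq.~(\ref{unilateral coherence}), and then to sharpen the resulting estimate by combining it with Theorem 2 and a comparison between the Holevo quantities and the classical correlation $J_{A}$.

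First I would apply Eq.~(\ref{unilateral coherence}) to each basis and add the two resulting identities,
\begin{equation}
C_{rel}^{B|A}(\mathbb{X}\vert\rho_{AB})+C_{rel}^{B|A}(\mathbb{Z}\vert\rho_{AB})=C_{rel}(\mathbb{X}\vert\rho_{A})+C_{rel}(\mathbb{Z}\vert\rho_{A})+2I(A:B)-I(X:B)-I(Z:B).\nonumber
\end{equation}
Applying Theorem 1 to the reduced state $\rho_{A}$ bounds the first two terms from below by $q_{MU}-S(\rho_{A})$. Using the identity $-S(\rho_{A})+I(A:B)=S(\rho_{B})-S(\rho_{AB})=-S(A\vert B)$, the right-hand side collapses to $q_{MU}-S(A\vert B)+\delta$, where $\delta=I(A:B)-I(X:B)-I(Z:B)$ is the Adabi-type correction. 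This part is routine bookkeeping with the mutual-information identities.

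The key step is to prove $\delta\geqslant D_{A}(\rho_{AB})-J_{A}(\rho_{AB})$. For this I would note that the rank-one projective measurements in $\mathbb{X}$ and $\mathbb{Z}$ are particular POVMs on $A$, so that the minimization over all POVMs in Eq.~(\ref{ss}) obeys $\min_{\{\Pi_{i}^{A}\}}S(\rho_{B|\{\Pi_{i}^{A}\}})\leqslant\sum_{x}p_{x}S(\rho_{B|x})$, and likewise for $\mathbb{Z}$. Hence each Holevo quantity is dominated by the classical correlation, $I(X:B)\leqslant J_{A}$ and $I(Z:B)\leqslant J_{A}$, giving $I(X:B)+I(Z:B)\leqslant 2J_{A}$. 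Since $D_{A}-J_{A}=I(A:B)-2J_{A}$, this is exactly $\delta\geqslant D_{A}-J_{A}$. I expect this comparison to be the main obstacle, as it is the point where the optimization structure defining the classical correlation has to be exploited.

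Finally I would combine the two available lower bounds. The chain above gives $C_{rel}^{B|A}(\mathbb{X}\vert\rho_{AB})+C_{rel}^{B|A}(\mathbb{Z}\vert\rho_{AB})\geqslant q_{MU}-S(A\vert B)+(D_{A}-J_{A})$, while Theorem 2, Eq.~(\ref{berta coherence}), independently gives $C_{rel}^{B|A}(\mathbb{X}\vert\rho_{AB})+C_{rel}^{B|A}(\mathbb{Z}\vert\rho_{AB})\geqslant q_{MU}-S(A\vert B)$. Since the left-hand side exceeds both, it exceeds their maximum, which is precisely $q_{MU}-S(A\vert B)+\max\{0,D_{A}-J_{A}\}$, establishing Eq.~(\ref{pati coherence}).
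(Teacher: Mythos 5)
Your proposal is correct and follows essentially the same route as the paper: decompose via Eq.~(\ref{unilateral coherence}), bound the single-party coherences with Theorem~1, and use $I(X:B),I(Z:B)\leqslant J_{A}(\rho_{AB})$ (which is exactly what the paper means by ``the second inequality follows from the definition of quantum discord'') to reach $q_{MU}-S(A\vert B)+D_{A}-J_{A}$. Your explicit appeal to Theorem~2 to justify the final $\max\{0,\cdot\}$ is in fact slightly more careful than the paper's last line, which silently treats $D_{A}-J_{A}$ as equal to $\max\{0,D_{A}-J_{A}\}$.
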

\begin{proof}According to Eq.\;(\ref{unilateral coherence}), the left-hand side of Eq.\;(\ref{pati coherence}) can be rewritten as\\ 
\begin{eqnarray}
\small C_{rel}(\mathbb{X}\vert\rho_{A})+C_{rel}(\mathbb{Z}\vert\rho_{A})+2I(A:B)-I(X:B)-I(Z:B) &\geqslant & q_{MU}-S(\rho_{A})+2I(A:B)-I(X:B)-I(Z:B) \nonumber \\
&\geqslant & q_{MU}-S(\rho_{A})+2D_{A}(\rho_{AB}) \nonumber \\
&=& q_{MU}-S(A\vert B)-I(A:B)+2D_{A}(\rho_{AB})  \\
&=& q_{MU}-S(A\vert B)+D_{A}(\rho_{AB})-J_{A}(\rho_{AB}) \nonumber \\
& =& q_{MU} - S(A \vert B)+\max\lbrace 0,D_{A}(\rho_{AB})-J_{A}(\rho_{AB}) \rbrace, \nonumber
\end{eqnarray}
where the first inequality follows from the Eq.\;(\ref{unrc1}) and the second inequality follows from the definition of quantum discord.
 \end{proof}
One can also improve the lower bound of the uncertainty relation for the relative entropy of unilateral coherence by adding an additional term depending on the mutual information and the Holevo quantity.
\begin{theorem}
For two incompatible observables $X$ and $Z$ with bases $\mathbb{X}$ and $\mathbb{Z}$ , respectively, and any state $\rho_{AB}$, the following uncertainty relation for the relative entropy of unilateral coherence holds, 
\begin{equation}\label{adabi coherence}
C_{rel}^{B|A}(\mathbb{X}\vert\rho_{AB})+C_{rel}^{B|A}(\mathbb{Z}\vert\rho_{AB})\geqslant q_{MU} -S(A\vert B)+\max\{0 , \delta\}.
\end{equation}
\end{theorem}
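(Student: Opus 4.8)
The plan is to follow the same opening decomposition as in the proof of Theorem~3, but to retain the full information-theoretic correction rather than bounding it via the discord inequality, and then to combine the resulting estimate with the bound of Theorem~2. First I would apply Eq.~(\ref{unilateral coherence}) to each of the two bases and add, writing the left-hand side as
\begin{equation}
C_{rel}(\mathbb{X}\vert\rho_{A})+C_{rel}(\mathbb{Z}\vert\rho_{A})+2I(A:B)-I(X:B)-I(Z:B).\nonumber
\end{equation}
Applying the single-system relation of Theorem~1, Eq.~(\ref{unrc1}), to the first two terms immediately gives the lower bound $q_{MU}-S(\rho_{A})+2I(A:B)-I(X:B)-I(Z:B)$.

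The key algebraic step is the entropic identity $S(\rho_{A})=I(A:B)+S(A\vert B)$, which follows directly from the definitions $I(A:B)=S(\rho_{A})+S(\rho_{B})-S(\rho_{AB})$ and $S(A\vert B)=S(\rho_{AB})-S(\rho_{B})$. Substituting this into the previous bound turns one of the two copies of $I(A:B)$ into $-S(A\vert B)+I(A:B)$ and collapses the expression to
\begin{equation}
q_{MU}-S(A\vert B)+\big[I(A:B)-I(X:B)-I(Z:B)\big]=q_{MU}-S(A\vert B)+\delta,\nonumber
\end{equation}
so the sum of unilateral coherences is bounded below by $q_{MU}-S(A\vert B)+\delta$.

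Finally, I would note that Theorem~2, Eq.~(\ref{berta coherence}), already supplies the independent bound $q_{MU}-S(A\vert B)$, which is exactly the same quantity with $\delta$ replaced by $0$. Since the left-hand side is simultaneously at least $q_{MU}-S(A\vert B)+\delta$ and at least $q_{MU}-S(A\vert B)$, it is at least the larger of the two, namely $q_{MU}-S(A\vert B)+\max\{0,\delta\}$, which is the claimed inequality.

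I do not anticipate a genuine obstacle: the statement is essentially a repackaging of Theorems~1 and~2 through the entropic identity above, in the same spirit as the passage from Berta's to Adabi's relation. The only point demanding care is the bookkeeping among the several quantities $S(\rho_A)$, $I(A:B)$, $I(X:B)$, $I(Z:B)$, and $S(A\vert B)$ — in particular verifying that the factor $2I(A:B)$ combines with $-S(\rho_A)$ to leave precisely one copy of $I(A:B)$ inside $\delta$ while producing the required $-S(A\vert B)$.
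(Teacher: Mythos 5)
Your proposal is correct and follows essentially the same route as the paper: decompose via Eq.~(\ref{unilateral coherence}), apply Theorem~1 to the single-system coherences, and use $S(\rho_A)=I(A:B)+S(A\vert B)$ to arrive at $q_{MU}-S(A\vert B)+\delta$. Your explicit invocation of Theorem~2 to justify the $\max\{0,\delta\}$ is in fact slightly more careful than the paper, which writes that final step as an equality.
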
 
\begin{proof}
\begin{eqnarray}
C_{rel}(\mathbb{X}\vert\rho_{A})+C_{rel}(\mathbb{Z}\vert\rho_{A})+2I(A:B)-I(X:B)-I(Z:B)& \geqslant & q_{MU}-S(\rho_{A})+2I(A:B)-I(X:B)-I(Z:B) \nonumber \\
&=& q_{MU}-S(A\vert B)+I(A:B)-I(X:B)-I(Z:B) \nonumber \\
&=& q_{MU} -S(A\vert B)+\max\{0 , \delta\}.
\end{eqnarray}
\end{proof} 
As can be seen, the lower bound in Eq.\;(\ref{adabi coherence}) is tighter than that in Eq.\;(\ref{pati coherence}). It is also basis dependent. \\

To illustrate these results, let us consider a special class of two-qubit $X$ states 
\begin{equation}\label{Xstate}
\rho_{AB}= p\vert\Psi^{+}\rangle\langle\Psi^{+}\vert+(1-p)\vert11\rangle\langle11\vert,
\end{equation}
where $|\Psi^{+}\rangle=\frac{1}{\sqrt{2}}(|01\rangle+|10\rangle)$ is a maximally entangled state and $0\leqslant p\leqslant 1$. Two complementary observables measured on the part $A$ of this state are assumed to be the Pauli matrices, $X=\sigma_{1}$ and $Z=\sigma_{3}$. In Fig. \ref{fig1}, different lower bounds of the uncertainty relation for the relative entropy of unilateral coherence for this state are plotted versus the parameter $p$. As can be seen, the lower bound in Eq.\;(\ref{adabi coherence}) is tighter than those in Eqs.\;(\ref{berta coherence}) and (\ref{pati coherence}). \\
\begin{figure}[ht] 
\centering
\includegraphics[width=8cm]{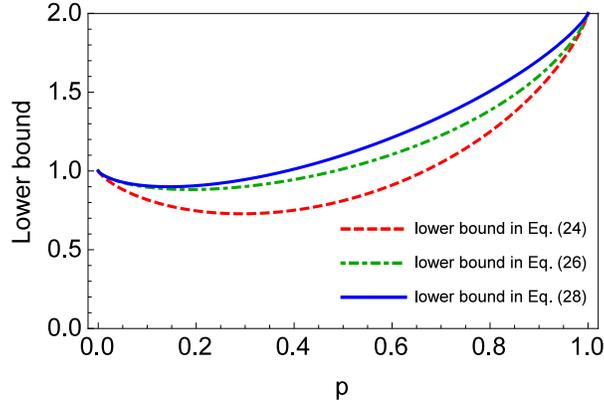}
\caption{(Color online) Different lower bounds of the uncertainty relation for the relative entropy of unilateral coherence for two complementary observables $X=\sigma_{1}$ and $Z=\sigma_{3}$ measured on the part $A$ of the state in  Eq.\;(\ref{Xstate}), versus the parameter $p$.}\label{fig1}
\end{figure}

Now, let us try to find out if there exist nontrivial upper bounds on the sum of the relative entropies of unilateral coherences of two general measurement bases. From Eq.\;(\ref{puritycoherence}) one can obtain a nontrivial upper bound on the sum of the relative entropies of unilateral coherences which is 
\begin{equation}\label{upperpurity}
2 \mathcal{P}_{rel}^{B|A}(\rho_{AB}) \geqslant C_{rel}^{B|A}(\mathbb{X}\vert\rho_{AB})+C_{rel}^{B|A}(\mathbb{Z}\vert\rho_{AB}).
\end{equation}
It is worth noting that if $X$ and $Z$ are complementary such that $q_{MU}=\log_2 d$, one has
\begin{eqnarray}
2 \mathcal{P}_{rel}^{B|A}(\rho_{AB})&\geqslant& C_{rel}^{B|A}(\mathbb{X}\vert\rho_{AB})+C_{rel}^{B|A}(\mathbb{Z}\vert\rho_{AB})\nonumber \\
&\geqslant& \mathcal{P}_{rel}^{B|A}(\rho_{AB}).
\end{eqnarray}
Also, when the subsystem $A$ is maximally mixed, the unilateral purity is equal to the mutual information, thus the above equation can be rewritten as
\begin{eqnarray}\nonumber
2 I(A:B)\geqslant C_{rel}^{B|A}(\mathbb{X}\vert\rho_{AB})+C_{rel}^{B|A}(\mathbb{Z}\vert\rho_{AB})\geqslant I(A:B).\\
\end{eqnarray}
It should be mentioned that the upper bound in Eq.\;(\ref{upperpurity}) can be improved by the Holevo quantity. To achieve this aim, one can use Eqs.\;(\ref{unilateral coherence}) and (\ref{puritycoherence}) as follows
\begin{eqnarray}
C_{rel}^{B|A}(\mathbb{Y}\vert\rho_{AB})&=&C_{rel}(\mathbb{Y}\vert\rho_{A})+I(A:B)-I(Y:B) \nonumber \\
&\leqslant &  \mathcal{P}_{rel}(\rho_{A})+I(A:B)-I(Y:B)\nonumber \\
&=&\mathcal{P}_{rel}^{B|A}(\rho_{AB})-I(Y:B).
\end{eqnarray}
Thus,
\begin{eqnarray}\label{upperpurity2}
2 \mathcal{P}_{rel}^{B|A}(\rho_{AB})-I(X:B)-I(Z:B) 
\geqslant C_{rel}^{B|A}(\mathbb{X}\vert\rho_{AB})+C_{rel}^{B|A}(\mathbb{Z}\vert\rho_{AB}).
\end{eqnarray}
This inequality becomes equality if the subsystem $A$ is maximally mixed. As can be seen, the upper bound in Eq.\;(\ref{upperpurity2}) depends on the observables, whereas the upper bound in Eq.\;(\ref{upperpurity}) is independent of the observables.   
As an example, let us consider a Bell diagonal state which can be written as
\begin{equation}
\rho_{AB}=\frac{1}{4}\left( \mathbf{I}_{A}\otimes\mathbf{I}_{B}+\sum^{3}_{i,j=1}w_{ij}\sigma_{i}\otimes\sigma_{j}\right) ,
\end{equation}
where $\sigma_{i} (i = 1,2,3)$ is the Pauli matrix. The correlation matrix, $W = \lbrace w_{ij} \rbrace$, can always be diagonalized by a suitable local unitary transformation, therefore one has
\begin{equation}\label{bell1}
\rho_{AB}=\frac{1}{4}\left( \mathbf{I}_{A}\otimes\mathbf{I}_{B}+\sum^{3}_{i=1}t_{i}\sigma_{i}\otimes\sigma_{i}\right) .
\end{equation}
Furthermore, when $t_{1}=1-2p$ and $t_{2}=t_{3}=-p$, with $0\leqslant p\leqslant 1$, the state in Eq. (\ref{bell1}) can be rewritten as
\begin{equation}\label{bellstate} 
\rho_{AB}=p \vert\Psi^{-}\rangle\langle\Psi^{-}\vert + \frac{1-p}{2}\left( \vert\Psi^{+}\rangle\langle\Psi^{+}\vert+\vert \Phi^{+} \rangle\langle \Phi^{+} \vert \right) ,\\
\end{equation}
\begin{figure}[ht] 
\centering
\includegraphics[width=8cm]{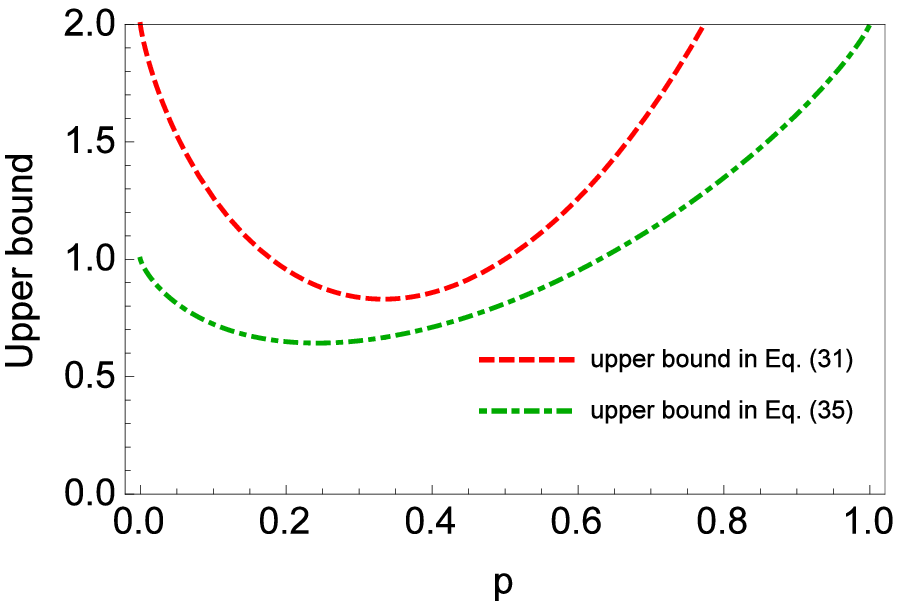}
\caption{(Color online) Different upper bounds of the uncertainty relation for the relative entropy of unilateral coherence for two complementary observables $X=\sigma_{1}$ and $Z=\sigma_{3}$ measured on the part $A$ of the state in  Eq.\;(\ref{bellstate}), versus the parameter $p$.}\label{fig2}
\end{figure}
\begin{figure}[ht] 
\centering
\includegraphics[width=8cm]{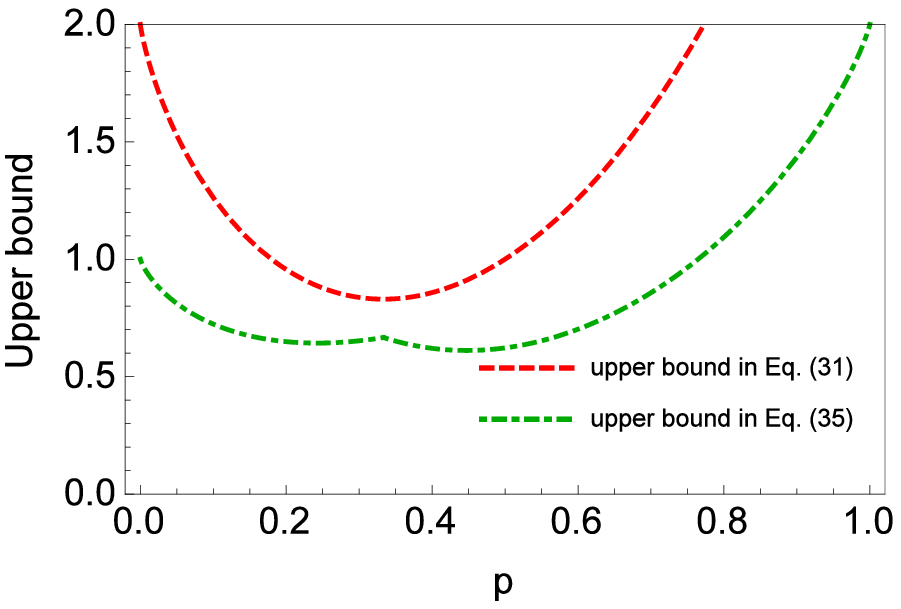}
\caption{(Color online) Different upper bounds of the uncertainty relation for the relative entropy of unilateral coherence for two complementary observables $X=\sigma_{1}$ and $Z=\sigma_{2}$ measured on the part $A$ of the state in  Eq.\;(\ref{bellstate}), versus the parameter $p$.}\label{fig3}
\end{figure}
where $\vert \Psi^{\mp} \rangle = \frac{1}{\sqrt{2}}(|01\rangle\mp|10\rangle)$, $\vert \Phi^{+} \rangle = \frac{1}{\sqrt{2}}(|00\rangle+|11\rangle)$ are the Bell states. Regarding this state, it turns out that there is the same upper bound for two sets of
complementary observables $\lbrace X=\sigma_{1} , Z=\sigma_{3} \rbrace$ and $\lbrace X=\sigma_{1} , Z=\sigma_{2} \rbrace$ which is 
\begin{equation}
2 \mathcal{P}_{rel}^{B|A}(\rho_{AB})=2\left( 2+p\log_{2}p+(1-p)\log_{2}(\frac{1-p}{2})\right) .
\end{equation}
However, the upper bounds in Eq.\;(\ref{upperpurity2}) for the above-mentioned two sets are 
\begin{eqnarray}
2 \mathcal{P}_{rel}^{B|A}(\rho_{AB})-I(X:B)-I(Z:B)&=&2\left( 2+p\log_{2}p+(1-p)\log_{2}(\frac{1-p}{2})\right)  \nonumber \\
&-&max\left\lbrace  1-h(p),1-h( \frac{1+p}{2})\right\rbrace  \nonumber \\
&-&min\left\lbrace  1-h(p),1-h(\frac{1+p}{2})\right\rbrace,
\end{eqnarray}
and
\begin{eqnarray}
2 \mathcal{P}_{rel}^{B|A}(\rho_{AB})-I(X:B)-I(Z:B)&=&2\left( 2+p\log_{2}p+(1-p)\log_{2}(\frac{1-p}{2})\right)  \nonumber \\
&-&max\left\lbrace  1-h(p),1-h(\frac{1+p}{2})\right\rbrace  -1+h(\frac{1+p}{2}),
\end{eqnarray}
respectively, where $h(x)=-x\log_{2}x-(1-x)\log_{2}(1-x)$ is the binary entropy.
In Figs. \ref{fig2} and \ref{fig3}, the upper bounds on the sum of the relative entropies of unilateral coherences for these observables are plotted in terms of the parameter $p$.

\section{quantum coherence and entropic uncertainty relations in the presence of a quantum memory}\label{Sec4}
In this section it is shown that regarding the uncertainty relations for the relative entropy of unilateral coherence, one can obtain the known EURs in the presence of a quantum memory. It turns out that this approach is very simple compared to the other methods.\\
Assume a bipartite system $AB$ in the state $\rho_{AB}$, if the measurement of the observable $Y \in \lbrace X, Z \rbrace$ is done on the part $A$, the post measurement state can be written as 
\begin{equation}
\rho_{YB}=\sum_{y}p_y \vert y\rangle\langle y \vert_{A}\otimes \rho_{B|y},
\end{equation}
where  $\vert y \rangle \in \lbrace \vert x \rangle,\vert z \rangle \rbrace$. $\rho_{YB}$ is the incoherent-quantum state with respect to $\lbrace\vert y \rangle\rbrace$ basis. Using the relative entropy of unilateral coherence introduced in Eq.\;(\ref{reou2}), one can obtain
\begin{equation}\label{tor1}
H(Y|B)=C_{rel}^{B|A}(\mathbb{Y}\vert\rho_{AB})+S(A|B).
\end{equation}
Considering Eq.\;(\ref{tor1}), for two incompatible quantum measurements (corresponding to two incompatible observables $X$ and $Z$) on the part $A$, one obtains
\begin{equation}\label{new}
H(X|B)+ H(Z|B)=C_{rel}^{B|A}(\mathbb{X}\vert\rho_{AB})+C_{rel}^{B|A}   (\mathbb{Z}\vert\rho_{AB})+2 S(A|B).
\end{equation}
Regarding Eq.\;(\ref{new}), the uncertainty relations for the relative entropy of unilateral coherence can be converted into the EURs in the presence of a quantum memory. This equation, along with Eq.\;(\ref{berta coherence}), leads us to
\begin{equation}\label{Berta22}
H(X \vert B)+H(Z \vert B) \geqslant q_{MU} +S(A \vert B),
\end{equation}
which is the Berta EUR in the presence of a quantum memory. The above derivation of Eq.\;(\ref{Berta22}) is much simpler than the one given in \cite{Berta}, which uses smooth entropies.\\
Similarly, one can obtain the Adabi and Pati EURs in the presence of a quantum memory using the relations obtained in Sec. \ref{Sec3}.
Considering Eqs.\;(\ref{pati coherence}) and (\ref{adabi coherence}), one can obtain the Pati (Eq.\;(\ref{pati})) and the Adabi (Eq.\;(\ref{new1})) EURs in the presence of a quantum memory, respectively. \\
Based on what has been mentioned so far, one comes to the result that there is a simple way  to convert an entropy-based uncertainty relation of unilateral coherence to the EUR in the presence of a quantum memory. According to our knowledge so far, it is actually the simplest method that has ever been provided to obtain the similar EURs.\\

In the bipartite system $AB$, entanglement between the part $A$ and the quantum memory $B$, $S(A \vert B)<0$, leads to a decrement in the lower bound of the EUR and an increment in that of uncertainty relation for the relative entropy unilateral coherence.  \\

Now, as an example, let us consider a two-qubit Werner state,
\begin{equation}\label{fig4} 
\rho_{AB}=p \vert \Psi^{+} \rangle\langle \Psi^{+} \vert + \frac{1-p}{4}\mathbf{I}_{A}\otimes\mathbf{I}_{B},
\end{equation}
 where $0\leqslant p\leqslant 1$. In Fig. \ref{fig4}, the lower bounds of the uncertainty relations for the relative entropy of unilateral coherence and the EUR in the presence of a quantum memory for this state are plotted versus the parameter $p$. As can be seen, when lower bound of coherence is larger than that of the EUR, the conditional entropy, $S(A \vert B)$, is negative meaning that $A$ and $B$ are entangled.

Finding nontrivial upper bounds on the sum of the entropies for different
observables in the presence of a quantum memory is largely an open problem. These bounds are called certainty relations in the presence of a quantum memory \cite{Coles1}.    The upper bound on the sum of the relative entropies of unilateral coherences, Eq.\;(\ref{upperpurity2}), can be used to derive a certainty relation in the presence of a quantum memory which is
\begin{equation}
2\log_{2} d-I(X:B)-I(Z:B) \geqslant H(X \vert B)+H(Z \vert B).
\end{equation}\\

It is interesting to note that the above inequality becomes an equality if the subsystem $A$ is maximally mixed. Thus, this upper bound is perfectly tight for the class of states with maximally mixed subsystem $A$ including Werner states, Bell diagonal states, and isotropic states.\\

\begin{figure}[ht] 
\centering
\includegraphics[width=8cm]{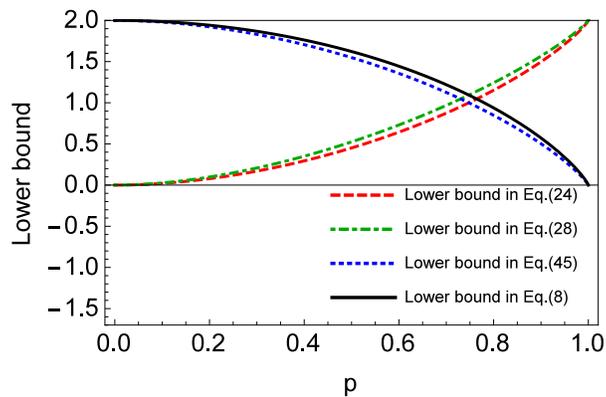}
\caption{(Color online) Lower bounds of uncertainty relation for the relative entropy of unilateral coherence and the EUR in the presence of a quantum memory for two complementary observables $X=\sigma_{1}$ and $Z=\sigma_{3}$, versus the parameter $p$ appeared in the expression for a correlated bipartite state which is assumed to be the Werner state.}\label{fig4}
\end{figure}


\section{Conclusion}\label{conclusion}
 In this work, the uncertainty relations for quantum coherence were obtained. To derive these relations, the properties of the relative entropy of quantum coherence have only been used. The most important property is that the relative entropy is nonincreasing under quantum operations. Regarding the  uncertainty relations for unilateral coherence, we could derive the uncertainty relations  in the presence of a quantum memory. We also provided two upper bounds on the sum of unilateral coherences defined based on the relative entropy in two incompatible reference bases and compared them. Using the upper bound on the sum of unilateral coherences, certainty relation in the presence of quantum memory was obtained.









\end{document}